\providecommand{\algorithmname}{Algorithm}
\newtheorem{theorem}{Theorem}[section]
\newtheorem{lemma}{Lemma}[section]
\newtheorem{proposition}{Proposition}[section]
\newtheorem{cor}{Corollary}[section]
\newtheorem{ass}{Assumption}[section]
\newcounter{hypA}
\newenvironment{hypA}{\refstepcounter{hypA}\begin{itemize}
  \item[({\bf A\arabic{hypA}})]}{\end{itemize}}
\newcounter{hypB}
\newread\mysource
\begin{document}

\title{Bayesian Static Parameter Estimation for Partially
Observed Diffusions via Multilevel Monte Carlo}
\date{}
\author[1]{Ajay Jasra\thanks{staja@nus.edu.sg}}
\author[2]{Kengo Kamatani\thanks{kamatani@sigmath.es.osaka-u.ac.jp}}
\author[3]{Kody J.~H.~Law\thanks{kodylaw@gmail.com}}
\author[4]{Yan Zhou\thanks{stazhou@nus.edu.sg}}
\affil[1,4]{Department of Statistics  Applied Probability, National University of Singapore, SG}
\affil[2]{Department of Engineering Science, Osaka University, JP}
\affil[3]{Computer Science and Mathematics Division, Oak Ridge National Laboratory, USA}

\maketitle

\begin{abstract}
In this article we consider static Bayesian parameter estimation for partially observed diffusions that are
discretely observed. We work under the assumption that one must resort to discretizing the underlying diffusion
process, for instance using the Euler Maruyama method. 
Given this assumption, we show how one can use Markov chain
Monte Carlo (MCMC) and particularly particle MCMC 
[Andrieu, C., Doucet, A. \& Holenstein, R. (2010). Particle Markov chain Monte Carlo methods (with discussion). {\it J. R. Statist. Soc. Ser. B}, 72, 269--342]
to implement a new approximation of the multilevel (ML) Monte Carlo (MC)
collapsing sum identity. Our approach comprises constructing an approximate coupling of the posterior density
of the joint distribution over parameter and hidden variables
at two different discretization levels and then correcting 
by an importance sampling method. 
The variance of the weights are independent of the length of the observed data set.
The utility of such a method is that, 
for a prescribed level of mean square error, the cost of this MLMC method is provably less than 
i.i.d.~sampling from the posterior associated to the most precise discretization. 
However the method here comprises using only known and efficient simulation methodologies.
The theoretical results are illustrated by inference of the parameters of
two prototypical processes given noisy partial observations of the process:
the first is an Ornstein Uhlenbeck process and the second is a more general Langevin equation.
\\

  \noindent \textbf{Key words}: Multilevel Monte Carlo, Markov chain Monte Carlo, Diffusion Processes 
\end{abstract}

\section{Introduction}

The Hidden Markov Model (HMM) is widely used in many disciplines, including applied mathematic, statistics, economics and finance; see \cite{cappe} for an overview.
In this article, we are interested in HMMs 
given by diffusions which are partially observed, 
discretely in time. 
In particular, we assume that in order to fit the model
to the data, one \emph{must} resort to a discretization of the diffusion, for instance, 
using Euler-Maruyama. In addition, we assume that associated to the model is a static 
(non-time-varying)  finite dimensional parameter, 
which one is interested to infer given a fixed length data record. In simple terms, the discretization, of level $h$ say, where
as $h\rightarrow 0$ one obtains the exact diffusion, induces a posterior say $\pi_h$ on the static parameter $\theta$ and hidden states at the observation times, say $X_{0:n}$. We seek to approximate
$\mathbb{E}_{\pi_h}[\varphi(\theta,X_{0:n})]$ for appropriately defined real-valued functions. 
Ultimately, one might seek to remove the dependence upon $h$ and get the
exact expectation with no discretization bias. We remark that the model will be formally introduced in the next section.
This framework is relevant to a broad range of applications in science and engineering; see \cite{cappe,law2015data}


The task of computing the expectation for any fixed $h>0$ is a non-trivial task, which often requires quite advanced Monte Carlo methods. As has been remarked in many articles
in the literature, ofen the joint correlation between $\theta$ and $X_{0:n}$ means even standard MCMC methods may produce very inaccurate of inefficient approximations
of the expectation of interest, despite their theoretical validity. An important algorithm that has, to an extent, helped to alleviate these difficulties is the particle MCMC (PMCMC)
methods of \cite{andrieu} and their subsequent developments (e.g.~\cite{corr_pm}). Intrinsically, this method uses a sequential Monte Carlo (SMC) (e.g.~\cite{doucet_johan}) method to help move the
samples around the state-space, for instance, inside a Metropolis-Hastings acceptance/rejection scheme, although Gibbs versions also exist. 
PMCMC delivers a Markov chain which provides consistent estimates of expectations of the form $\mathbb{E}_{\pi_h}[\varphi(\theta,X_{0:n})]$, for any fixed $h$
SMC methods are well-known as being efficient techniques for filtering, when the state-variable at time $k$, $X_k$, is of moderate to low dimension and  all the static parameters are fixed.

In the context of this article, there is an additional degree of freedom, which can be utilized to further enhance the PMCMC method. This is associated to the discretization level $h$.
We consider using the multilevel Monte Carlo (MLMC) framework  
\cite{giles:08, giles_acta,heinrich2001multilevel}.
This allows one to leverage in an optimal way 
the nested problems arising in this context, hence minimizing the necessary cost to obtain a given level of mean square error.  
Set $\pi$ as the posterior on $\theta,X_{0:n}$ with no discretization bias
 and $\pi_{h_l}$ as the time-discretized posterior on $\theta,X_{0:n}$ with time discretization $h_l$, one
has for an intergrable, real-valued function $\varphi$ and $+\infty>h_0>h_1>\cdots>h_L>0$ (the levels)
\begin{equation}\label{eq:telescope}
\mathbb{E}_{\pi_{h_L}}[\varphi(\theta,X_{0:n})] = \sum_{l=0}^L\{\mathbb{E}_{\pi_{h_l}}[\varphi(\theta,X_{0:n})] - \mathbb{E}_{\pi_{h_{l-1}}}[\varphi(\theta,X_{0:n})]\}
\end{equation}
where $\mathbb{E}$ is the expectation operator and $\mathbb{E}_{\pi_{h_{-1}}}[\varphi(\theta,X_{0:n})]:=0$.
The idea of MLMC is then to approximate each summand by independently simulating $N_l$ samples from a dependent coupling of $(\pi_{h_l},\pi_{h_{l-1}})$.
In such scenarios, one can show that the overall mean square error (MSE) associated to the approximation of $\mathbb{E}_{\pi}[\varphi(\theta,X_{0:n})]$ is:
\begin{equation}\label{eq:mse}
\textrm{MSE} = \textrm{Bias}(L,\varphi)^2 + \sum_{l=0}^L \frac{V_l}{N_l} \ ,
\end{equation}
where 
\begin{equation}\label{eq:bias}
\textrm{Bias}(L,\varphi) = |\mathbb{E}_{\pi_{h_L}}[\varphi(\theta,X_{0:n})]-\mathbb{E}_{\pi} [\varphi(\theta,X_{0:n})]| \ ,
\end{equation} 
and $0<V_l<+\infty$ are a collection of constants.
It is remarked that it is the coupled samples which induce $V_l$ to be a function of $h_l$ which is often critical as we explain below.
Assuming the cost of $C_l$ per level, per sample, the cost of the algorithm is then $\sum_{l=0}^L C_l N_l$.
Fixing $\epsilon>0$ and given an appropriate parameterization of $h_l$ (e.g.~$h_l=2^{-l}$), one then chooses $L$ to ensure that $\textrm{Bias}(L,\varphi)^2=\mathcal{O}(\epsilon^2)$ and then given $C_l,V_l$ characterised
as a function of $h_l$ optimizes $N_0,\dots,N_L$ to minimize the cost so that the term $\sum_{l=0}^L \frac{V_l}{N_l}=\mathcal{O}(\epsilon^2)$;
\cite{giles:08} gives the solution to this constrained optimization problem. In many scenarios of practical interest the associated MLMC algorithm can achieve a MSE of $\mathcal{O}(\epsilon^2)$ at a cost which is less than i.i.d.~sampling from
$\pi_{h_L}$; note that 
this has not yet been established in the problem under study here. 
The main issue is that sampling independently from the couples $(\pi_{h_l},\pi_{h_{l-1}})$ 
is not possible in our context.

In this paper we show how to implement a new approximation of the multilevel
collapsing sum identity. Our approach comprises constructing an approximate coupling of the posterior density
of the joint on the parameter and hidden space at two different discretization levels and then correcting 
by an importance sampling method, whose variance of the weights are independent of the length of the observed data set.
The utility of such a method is that it comprises using known and efficient simulation methodologies, 
instead of coupling
algorithms as explored in \cite{jacob,mlpf,ourmlpf1,sen}. 
In particular, our approach facilitates a mathematical analysis which allows us to establish that 
our approach can be better than sampling (e.g.~by PMCMC) from the posterior associated to the most precise discretization.
The algorithm presented here is distinct from either of the previously introduced multilevel
MCMC (MLMCMC) algorithms \cite{hoan:12, ketelsen2013hierarchical}, and may be generalized. 

This article is structured as follows. In Section \ref{sec:model} the model is described. In Section \ref{sec:app_theory} we describe our approach and
give a mathematical result associated to the MSE of the method. In Section \ref{sec:numerics} we give practical simulations to establish the theory.
The appendix contains some of the proofs for the result of Section \ref{sec:app_theory}.

\section{Model}\label{sec:model}

We consider the following partially-observed diffusion process:
\begin{eqnarray}
dX_t & = & a_{\theta}(X_t)dt + b_{\theta}(X_t)dW_t
\label{eq:sde}
\end{eqnarray}
with $X_t\in\mathbb{R}^d=\mathsf{X}$, $t\geq 0$, $X_0$ has initial probability density $f_{\theta}$ and $\{W_t\}_{t\in[0,T]}$ a Brownian motion of appropriate dimension. 
$\theta\in\Theta\subseteq\mathbb{R}^{d_\theta}$ is a static parameter of interest.
The following assumptions will be made on the diffusion process. 
\begin{ass} 
$a_{\theta}:\mathbb{R}^d\rightarrow\mathbb{R}^d$, $b_{\theta}:\mathbb{R}^d\rightarrow\mathbb{R}^{d\times d}$  satisfy 
\begin{itemize}
\item[{\rm (i)}] {\bf global Lipschitz property}:
there is a $C>0$ such that 
$|a_{\theta}(x)-a_{\theta}(y)|+|b_{\theta}(x)-b_{\theta}(y)| \leq C |x-y|$ 
for all $x,y \in \mathsf{X}$  
and all $\theta\in\Theta$; 
\item[{\rm (ii)}] {\bf bounded moments}: $\sup_{\theta\in\Theta}\mathbb{E}_\theta |X_0|^p < \infty$ for all $p \geq 1.$
\end{itemize}
\label{ass:diff}
\end{ass}
Notice that (i) and (ii) together imply that $\mathbb{E}_\theta |X_n|^p < \infty$ for all $n$.

It will be assumed that the data are 
regularly spaced (i.e.~in discrete time) observations 
$y_1,\dots,y_{n}$, $y_k \in \mathbb{R}^m=\mathsf{Y}$.
It is assumed that conditional on $X_{k\delta}$, for discretization $\delta>0$,
$Y_k$ is independent of all other random variables with density $g_{\theta}(x_{k\delta},y_k)$.
For simplicity of notation let $\delta=1$ (which can always be done by rescaling time), so $X_k = X_{k\delta}$.
It is noted that we assume that one does not have access to a non-negative and unbiased estimate
of the transition density of the diffusion and we are forced to work with a discretized process.

The above formulation can then summarized as follows, on discretizing the diffusion process with discretization level $h$.
We have a pair of discrete-time stochastic
processes, $\left\{  X_{n}\right\}  _{n\mathbb{\geq}0}$ and $\left\{
Y_{n}\right\}  _{n\geq 1}$, where $X_{n}\in\mathsf{X}$ (with associated $\sigma-$algebra $\mathcal{X}$) is an unobserved
process and $y_{n}\in\mathsf{Y}$ (with associated $\sigma-$algebra $\mathcal{Y}$) is observed. Let $\theta\in\Theta\subseteq\mathbb{R}^{d_\theta}$ be a parameter .
The hidden process
$\left\{  X_{n}\right\}  $ is a\ Markov chain with initial density $f_{\theta} $ at time $0$ and transition density $f_{\theta,h}\left(x_{p-1},x_{p}\right)
$, i.e.~for each $\theta\in\Theta$
\begin{equation}
\mathbb{P}_{\theta,h}(X_{0}\in A)=\int_A f_{\theta}(x)dx \quad\text{ and }%
\quad\mathbb{P}_{\theta,h}(X_{p}\in A|X_{p-1}=x_{p-1})=\int_{A}f_{\theta,h}(x_{p-1},x_p
)dx_{p}\quad p\geq1 \label{eq:evol}%
\end{equation}
where $\mathbb{P}_{\theta,h}$ denotes probability, $A\in\mathcal{X}$ and
$dx_{n}$ is a dominating $\sigma$-finite measure.
In addition, the observations
$\left\{  Y_{n}\right\}  _{n\geq 1}$\ conditioned upon $\left\{  X_{n}\right\}
_{n\mathbb{\geq}0}$ are statistically independent and have marginal density
$g_{\theta}\left(x_{n},y_{n}\right)  $, i.e.%
\begin{equation}
\mathbb{P}_{\theta,h}(Y_{n}\in B|\{X_{k}\}_{k\geq 0}=\{x_{k}\}_{k\geq 0})=\int_{B}%
g_{\theta}(x_{n},y_{n})dy_{n}\quad n\geq 1  \label{eq:obs}%
\end{equation}
with $B\in\mathcal{Y}$ and $dy_{n}$ the dominating $\sigma$-finite measure. The HMM is given by equations (\ref{eq:evol})-(\ref{eq:obs}) and is often referred to in the literature as a state-space model. In our context  $\theta\in\Theta$ is a parameter of interest with prior $\pi_{\theta}$.

Given the joint density on $\mathsf{U}:=\Theta\times\mathsf{X}^{n+1}$ 
$$
\pi_{h}(\theta,x_{0:n}) \propto \pi_{\theta}(\theta)f_\theta(x_0) 
\prod_{p=1}^n g_{\theta}(x_p,y_p) f_{\theta,h}(x_{p-1},x_p) \ ,
$$
for $\varphi\in\mathcal{B}_{b}(\mathsf{U})\cap\textrm{Lip}(\mathsf{U})$, 
where $\mathcal{B}_{b}(\mathsf{U})$ are the bounded and real-valued measurable functions
on $\mathsf{U}$ and $\textrm{Lip}(\mathsf{U})$ are the Lipschitz, measurable functions
on $\mathsf{U}$, and for $+\infty>h_0>\cdots>h_L>0$ we would like to compute 
\begin{equation}\label{eq:ml_id}
\mathbb{E}_{\pi_{h_L}}[\varphi(\theta,X_{0:n})] = \sum_{l=0}^L\Big\{\mathbb{E}_{\pi_{h_l}}[\varphi(\theta,X_{0:n})] - \mathbb{E}_{\pi_{h_{l-1}}}[\varphi(\theta,X_{0:n})]
\Big\}
\end{equation}
where $\mathbb{E}_{\pi_{h_{-1}}}[\cdot] = 0$. We will use the MLMC approach.

Consider only a single pair $\mathbb{E}_{\pi_{h}}[\varphi(\theta,X_{0:n})] - \mathbb{E}_{\pi_{h'}}[\varphi(\theta,X_{0:n})]$, $h<h'$.
It is well known that if one can sample from a dependent coupling of $(\pi_h,\pi_{h'})$, such as the maximal coupling, then Monte Carlo estimation of
such a difference can be performed at a lower cost than i.i.d~sampling from the independent coupling of $(\pi_h,\pi_{h'})$ \cite{giles:08,giles_acta}.
The main issue is that such couplings are typically not available up-to a non-negative and unbiased estimator.
We consider the scenario where one samples from a sensible, approximate, coupling and corrects via importance sampling.

\section{Method and Analysis}\label{sec:app_theory}

\subsection{Method}

We are to approximate the identity \eqref{eq:ml_id}.
Our procedure, when considering the summands from $1,\dots,L$ will be to run $L$ independent pairs of the idea to be described below. The case $l=0$ is simply using (e.g.) PMCMC to approximate $\mathbb{E}_{\pi_{h_0}}[\varphi(\theta,X_{0:n})]$; we refer
the reader to \cite{andrieu} for details on PMCMC - a simple decsription is below.
We only consider a pair $\mathbb{E}_{\pi_{h}}[\varphi(\theta,X_{0:n})] - \mathbb{E}_{\pi_{h'}}[\varphi(\theta,X_{0:n})]$, $h<h'$.
The methodology and analysis in this context of one pair will suffice to justify our approach as we will explain below.

Let $z=(x,x')\in\mathsf{X}\times\mathsf{X}=\mathsf{Z}$ and $Q_{\theta,h,h'}(z,\bar{z})$ be any coupling (other than the independent one) of $(f_{\theta,h}(x,\bar{x}),
f_{\theta,h'}(x',\bar{x}'))$. For instance, in the context of an Euler discretization a description can be found in \cite{ourmlpf1} (see also appendix \ref{app:couple_euler}). Let $G_{p,\theta}(z) = \max\{g_{\theta}(x,y_p),g_{\theta}(x',y_p)\}$ (note that alternative choices of $G_{p,\theta}$ are possible).
We propose to sample from the probability density on $\mathsf{V} = \Theta\times\mathsf{X}^{2n+2}$ (write the associated $\sigma-$algebra as $\mathcal{V}$)
$$
\pi_{h,h'}(\theta,z_{0:n}) \propto \pi_{\theta}(\theta)\nu_\theta(z_0)
\prod_{p=1}^n G_{p,\theta}(z_p) Q_{\theta,h,h'}(z_{p-1},z_p).
$$
Then for $\varphi\in\mathcal{B}_{b}(\mathsf{U})\cap\textrm{Lip}(\mathsf{U})$:
$$
\mathbb{E}_{\pi_h}[\varphi(\theta,X_{0:n})] - \mathbb{E}_{\pi_{h'}}[\varphi(\theta,X_{0:n})] =
$$
\begin{equation}
\frac{\mathbb{E}_{\pi_{h,h'}}[\varphi(\theta,X_{0:n})H_{1,\theta}(\theta,Z_{0:n})]}{\mathbb{E}_{\pi_{h,h'}}[H_{1,\theta}(\theta,Z_{0:n})]} - 
\frac{\mathbb{E}_{\pi_{h,h'}}[\varphi(\theta,X_{0:n}')H_{2,\theta}(\theta,Z_{0:n})]}{\mathbb{E}_{\pi_{h,h'}}[H_{2,\theta}(\theta,Z_{0:n})]}
\label{eq:master_eq}
\end{equation}
where
\begin{eqnarray*}
H_{1,\theta}(\theta,z_{0:n}) & = & \prod_{p=1}^n\frac{g_{\theta}(x_p,y_p)}{G_{p,\theta}(z_p)} \\
H_{2,\theta}(\theta,z_{0:n}) & = & \prod_{p=1}^n\frac{g_{\theta}(x_p',y_p)}{G_{p,\theta}(z_p)}.
\end{eqnarray*}
We note that our choice of $G_{p,\theta}(z)$ ensures that $H_{1,\theta}$ and $H_{2,\theta}$ are uniformly
upper-bounded by 1 and hence that the variance w.r.t.~any probability is independent of $n$.

\subsubsection{Particle MCMC}

Let $(\mathsf{W},\mathcal{W})$ be a measurable space such that $\mathsf{V}\subseteq \mathsf{W}$.
Let $K:\mathsf{W}\times\mathcal{W}\rightarrow[0,1]$ be any ergodic Markov kernel of invariant measure
$\eta$ such that one can consistently estimate expectations w.r.t.~$\pi_{h,h'}$. 
For instance, if
for every $A\in\mathcal{V}$
$$
\int_{A\times(\mathsf{W}\setminus\mathsf{V})} \eta(dw) = \int_{A} \pi_{h,h'}(\theta,z_{0:n}) d(\theta,z_{0:n}).
$$
Our construction allows a particle MCMC approach to be adopted, 
which is not quite as the displayed equation, but nonetheless allows
one to infer $\pi_{h,h'}$. We focus on one particle MCMC method for completeness, but, we reiterate that one can use
the analysis here for more advanced versions of the algorithm, or indeed, any MCMC of the form above.

We will now describe the particle marginal Metropolis-Hastings (PMMH) algorithm. 
Let $M\geq 1$ and $\theta$ be fixed, and introduce random variables $a_{0:n-1}\in\{1,\dots,M\}^n$,
which will denote the indices of the selected particles upon resampling at the given steps. 
One can run a particle filter \cite{delm:04} 
to approximate 
$$
\pi_{h,h'}(z_{0:n}|\theta) \propto \nu_\theta(z_0)\prod_{p=1}^n G_{p,\theta}(z_p) Q_{\theta,h,h'}(z_{p-1},z_p)
$$ 
by sampling from the following joint, on the
space $\{1,\dots,M\}^{n}\times\mathsf{Z}^{M(n+1)}$
\begin{equation}\label{eq:pf_law}
p(a_{0:n-1}^{1:M},z_{0:n}^{1:M}|\theta) = 
\Big(\prod_{i=1}^M  \nu_{\theta}(z_0^i)\Big)\prod_{p=1}^n \prod_{i=1}^M\Big( \frac{G_{p-1,\theta}(z_{p-1}^{a_{p-1}^i})}{\sum_{j=1}^M G_{p-1,\theta}(z_{p-1}^j)} Q_{\theta,h,h'}(z_{p-1}^{a_{p-1}^i},z_p^i)\Big) \ ,
\end{equation}
where $G_{0,\theta}:=1$.
Note that better algorithms can be constructed, but we just present the most simple approach. We remark that
\begin{equation}\label{eq:pf_nc}
p^M_{h,h'}(y_{0:n}|\theta) = 
\prod_{p=1}^{n}\Big(\frac{1}{M}\sum_{j=1}^M G_{p,\theta}(z_{p}^j)\Big)
\end{equation}
is an unbiased estimator of $p_{h,h'}(y_{0:n}|\theta) = \int_{\mathsf{Z}^{n+1}} \nu_\theta(z_0)\prod_{p=1}^n G_{p,\theta}(z_p) Q_{\theta,h,h'}(z_{p-1},z_p) dz_{0:n}$; see \cite{delm:04}.

The PMMH algorithm works as follows. 
The superscripts for $(\theta,k)$ are the iteration (time) counter of the MCMC.
\begin{enumerate}
\item{Initialize: Sample $\theta^{0}$ from the prior and then sample 
$(a_{0:n-1}^{1:M},z_{0:n}^{1:M})$
from $p(a_{0:n-1}^{1:M},z_{0:n}^{1:M}|\theta^{0})$ as in \eqref{eq:pf_law}, 
and store $p^M_{h,h'}(y_{0:n}|\theta^{0})$ as in \eqref{eq:pf_nc}.
Select a path $z_{0:n}^{j}$, constructed by drawing $z_n^j$ with probability proportional to 
$G_{n,\theta^{0}}(z_{n}^j)$, and setting $(z^{j'}_{p-1}|z^{j'}_p)=z_{p-1}^{a_{p-1}^{j'}}$;} 
set $k^{0}$ as the index of the selected path. Set $i=1$.
\item{Iterate: Sample $\theta'|\theta^{i-1}$ according to a proposal with conditional density $q(\theta'|\theta^{i-1})$  then from $p(a_{0:n-1}^{1:M},z_{0:n}^{1:M}|\theta')$ as in \eqref{eq:pf_law}.
Select a path $z_{0:n}^{j}$ with probability proportional to $G_{n,\theta'}(z_{n}^j)$ 
and constructed as described above; set $k'$ as the index of the selected path. Set $\theta^{i}=\theta'$, $k^{i}=k'$ with probability:
$$
1\wedge \frac{p^M_{h,h'}(y_{0:n}|\theta')}{p^M_{h,h'}(y_{0:n}|\theta^{i-1})}\frac{\pi_{\theta}(\theta')q(\theta^{i-1}|\theta')}{\pi_{\theta}(\theta^{i-1})q(\theta'|\theta^{i-1})}
$$
otherwise $\theta^{i}=\theta^{i-1}$, $k^{i}=k^{i-1}$. Set $i=i+1$ and return to the start of 2.}
\end{enumerate}
We denote by $K$ the PMMH kernel and denote by $(\mathsf{W},\mathcal{W})$ the measurable space for which it is defined upon. The invariant measure is denoted $\eta$. For the analysis, we assume the MCMC algorithm
is started in stationarity.

Then one estimates \eqref{eq:master_eq} by
$$
\frac{\frac{1}{N}\sum_{i=1}^N \varphi(\theta^i,x_{0:n}^{k^i})H_{1,\theta^i}(\theta^i,z_{0:n}^{k^i})}{\frac{1}{N}\sum_{i=1}^N H_{1,\theta^i}(\theta^i,z_{0:n}^{k^i})} 
- \frac{\frac{1}{N}\sum_{i=1}^N\varphi(\theta^i,x_{0:n}'^{k^i})H_{2,\theta^i}(\theta^i,z_{0:n}^{k^i})}{\frac{1}{N}\sum_{i=1}^NH_{2,\theta^i}(\theta^i,z_{0:n}^{k^i})}.
$$
This estimate is consistent in the limit as $N$ grows; see \cite{andrieu}. 
To simplify the notation we replace 
$k^i$ in the superscripts by $i$ from here on.

\subsection{Multilevel Considerations}\label{sec:mlmc}

As described for MLMC in the introduction, we will approximate the expectation 
using the telescopic sum identity given in \eqref{eq:telescope}.  
We will establish error estimates for
\begin{equation}\label{eq:telebpe}
\sum_{l=0}^L \bar{E}_l^{N_l}(\varphi), \qquad \bar{E}_l^{N_l}(\varphi)=E_l^{N_{l}}(\varphi)-E_l(\varphi) \ ,
\end{equation}
where
\begin{align}\label{eq:el}
E_l^{N_{l}}(\varphi) = 
\frac{\frac{1}{N_l}\sum_{i=1}^{N_l} \varphi(\theta^i,x_{0:n}^i)H_{1,\theta^i}(\theta^i,z_{0:n}^i)}{\frac{1}{N_l}\sum_{i=1}^{N_l} H_{1,\theta^i}(\theta^i,z_{0:n}^i)} 
- \frac{\frac{1}{N_l}\sum_{i=1}^{N_l}\varphi(\theta^i,x_{0:n}'^i)H_{2,\theta^i}(\theta^i,z_{0:n}^i)}{\frac{1}{N_l}\sum_{i=1}^{N_l}H_{2,\theta^i}(\theta^i,z_{0:n}^i)}
\end{align}
is a consistent estimator of 
$E_l(\varphi) := 
\mathbb{E}_{\pi_{h_l}}[\varphi(\theta,X_{0:n})] - \mathbb{E}_{\pi_{h_{l-1}}}[\varphi(\theta,X_{0:n})]$. 
Therefore \eqref{eq:telebpe} is a consistent estimator of 
$\mathbb{E}_{\pi_{h_L}}[\varphi(\theta,X_{0:n})]$ and the the MSE \eqref{eq:mse}
can be bounded, up to a constant, by the sum of the squared error of \eqref{eq:telebpe} 
and Bias$(L,\varphi)^2$, as given by \eqref{eq:bias}, 
which is $\mathcal{O}(h_L)$ for example using Euler Maruyama.

Using $\mathbb{E}$ to denote the expectation w.r.t.~the law associated to our algorithm, assuming the Markov chain is started in stationarity, our objective is therefore to investigate
\begin{equation}
\mathbb{E}[(\sum_{l=0}^L \bar{E}_l^{N_l}(\varphi))^2] = \sum_{l=0}^L\mathbb{E}[\bar{E}_l^{N_l}(\varphi)^2] 
\label{eq:var_cmcmc}
\end{equation}
so as to optimally allocate $N_0,\dots,N_L$ as described in the introduction.
Thus we must investigate terms such as $\mathbb{E}[\bar{E}_l^{N_l}(\varphi)^2]$ for a given $l$.

\subsection{Analysis}

Below $\mathcal{P}(\mathsf{W})$ are the collection of probability measures on $(\mathsf{W},\mathcal{W})$.

\begin{hypA}
\label{hyp:A}
For every $y\in\mathsf{Y}$
there exist $0< \underline{C}< \overline{C}<+\infty$
such that for every
$x\in\mathsf{X}$, $\theta\in\Theta$, 
$$
\underline{C} \leq g_{\theta}(x,y) \leq \overline{C}.
$$
For every $y\in\mathsf{Y}$, $g_{\theta}(x,y)$
 is globally Lipschitz on $\mathsf{X}\times\Theta$.
\end{hypA}
\begin{hypA}
\label{hyp:B}
For any $0\leq k\leq n$, $q\in\{1,2\}$ there exists a $\beta > 0 $ 
such that for any 
$\varphi\in\mathcal{B}_b(\Theta\times\mathsf{X}^{k+1})\cap\textrm{Lip}(\Theta\times\mathsf{X}^{k+1})$
there exists a $C<+\infty$
$$
\left( \int_{\Theta\times\mathsf{X}^{2k+2}}|\varphi(\theta,x_{0:k})-\varphi(\theta,x_{0:k}')|^q 
\prod_{p=1}^k Q_{\theta,h,h'}(z_{k-1},z_k) \pi_{\theta}(\theta)\nu_\theta(z_0)d\theta dz_{0:k} \right)^{3-q}
\leq C ( h')^{\beta}.
$$
\end{hypA}
\begin{hypA}
\label{hyp:C}
Suppose that for any $n>0$ there exist a $\xi\in(0,1)$ and $\nu\in\mathcal{P}(\mathsf{W})$ such that for each $w\in\mathsf{W}$, $\varphi\in\mathcal{B}_{b}(\mathsf{W})\cap\textrm{Lip}(\mathsf{W})$, $h,h'$:
$$
\int_{\mathsf{W}} \varphi(w') K(w,dw')  \geq \xi \int_{\mathsf{W}} \varphi(w) \nu(dw).
$$
$K$ is $\eta$-reversible, that is, $\int_{w\in B}\eta(dw)K(w,A)=\int_{w\in A}\eta(dw)K(w,B)$ for any $A, B\in\mathcal{W}$. 
\end{hypA}

We note that (A\ref{hyp:A}) can be verified for some state-space models (especially if $\mathsf{Y}$ and $\Theta$ are compact) and (A\ref{hyp:C}) can be verified for a PMCMC
kernel, if $\Theta,\mathsf{X}$ are compact - indeed, the constants would all be independent of $n$ under appropriate settings of the algorithm.

\begin{theorem}\label{thm:main}
Assume (A\ref{hyp:A}-\ref{hyp:C}). Then for any $n>0$,
there exists a  $\beta > 0 $ 
such that for any $\varphi\in\mathcal{B}_b(\Theta\times\mathsf{X}^{n+1})\cap\textrm{Lip}(\Theta\times\mathsf{X}^{n+1})$
there exists a $C<+\infty$ such that
$$
\mathbb{E}\Bigg[\Bigg(
\frac{\frac{1}{N}\sum_{i=1}^N \varphi(\theta^i,x_{0:n}^i)H_{1,\theta^i}(\theta^i,z_{0:n}^i)}{\frac{1}{N}\sum_{i=1}^N H_{1,\theta^i}(\theta^i,z_{0:n}^i)} 
- \frac{\frac{1}{N}\sum_{i=1}^N\varphi(\theta^i,x_{0:n}'^i)H_{2,\theta^i}(\theta^i,z_{0:n}^i)}{\frac{1}{N}\sum_{i=1}^NH_{2,\theta^i}(\theta^i,z_{0:n}^i)}
$$
$$
-
\Bigg(
\frac{\mathbb{E}_{\pi_{h,h'}}[\varphi(\theta,X_{0:n})H_{1,\theta}(\theta,Z_{0:n})]}{\mathbb{E}_{\pi_{h,h'}}[H_{1,\theta}(\theta,Z_{0:n})]} - 
\frac{\mathbb{E}_{\pi_{h,h'}}[\varphi(\theta,X_{0:n}')H_{2,\theta}(\theta,Z_{0:n})]}{\mathbb{E}_{\pi_{h,h'}}[H_{2,\theta}(\theta,Z_{0:n})]}
\Bigg)
\Bigg)^2\Bigg] \leq \frac{C(h')^{\beta}}{N}.
$$
\end{theorem}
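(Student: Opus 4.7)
The plan is to decompose the error of the ratio estimator into a leading linear-in-MCMC term with small $L^2(\eta)$ norm, plus second-order remainders. Set $\phi_1(\theta,z_{0:n}) = \varphi(\theta,x_{0:n})H_{1,\theta}(\theta,z_{0:n})$, $\psi_1(\theta,z_{0:n}) = H_{1,\theta}(\theta,z_{0:n})$, and analogously $\phi_2,\psi_2$ with $x'$ in place of $x$ and $H_2$ in place of $H_1$. Write $A_j = \pi_{h,h'}(\phi_j)$, $B_j=\pi_{h,h'}(\psi_j)$, $R_j = A_j/B_j$. The choice $G_{p,\theta}=\max$ forces $H_j\in[0,1]$, and (A\ref{hyp:A}) gives $H_j\geq(\underline{C}/\overline{C})^n$, hence $B_j\geq(\underline{C}/\overline{C})^n>0$ uniformly in $h,h'$. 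Using $\bar\phi_j/\bar\psi_j - R_j = \bar F_j/\bar\psi_j$ with $F_j=\phi_j-R_j\psi_j$ (so $\eta(F_j)=0$) and expanding $1/\bar\psi_j = 1/B_j - (\bar\psi_j-B_j)/(B_j\bar\psi_j)$, the error decomposes as $E = \bar H + E_{\mathrm{rem}}$, where $H := F_1/B_1 - F_2/B_2$ is $\eta$-centered and $E_{\mathrm{rem}}$ collects the cross terms $\bar F_j(\bar\psi_j-B_j)/(B_j\bar\psi_j)$.

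The heart of the argument is to show $\eta(H^2) \leq C(h')^\beta$. I would write $H = B_2^{-1}(F_1-F_2) + F_1(B_2-B_1)/(B_1B_2)$ and expand $F_1 - F_2 = (\phi_1-\phi_2) - R_1(\psi_1-\psi_2) + (R_2-R_1)\psi_2$. Now $\phi_1-\phi_2 = (\varphi(x)-\varphi(x'))H_2 + \varphi(x)(H_1-H_2)$, and (A\ref{hyp:A}) (Lipschitz $g_\theta$, lower bound $\underline{C}$ on $G$) gives $|H_1-H_2|\leq C\sum_p|x_p-x'_p|$ pointwise. Then (A\ref{hyp:B}) with $q=2$ bounds $\eta[(\varphi(x)-\varphi(x'))^2]\leq C(h')^\beta$, and combined with (A\ref{hyp:A}) and analogous coupling-moment control on $\sum_p|x_p-x'_p|^2$ (via (A\ref{hyp:B}) applied to appropriate Lipschitz proxies, together with Assumption \ref{ass:diff}), this yields $\eta[(H_1-H_2)^2]\leq C(h')^\beta$ and the same for $\psi_1-\psi_2 = H_1-H_2$. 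The passage from the reference measure $\pi_\theta\nu_\theta\prod Q$ in (A\ref{hyp:B}) to $\eta=\pi_{h,h'}$ costs only a bounded Radon--Nikodym factor because (A\ref{hyp:A}) forces $\prod_p G_{p,\theta}\in[\underline{C}^n,\overline{C}^n]$. The scalars $R_1-R_2$ and $B_1-B_2$ are the multilevel biases: the same decomposition applied at the integral level, with Cauchy--Schwarz and (A\ref{hyp:B}) with $q=1$, gives $|R_1-R_2|,\ |B_1-B_2| \leq C(h')^{\beta/2}$. Combined with $\eta(F_1^2)\leq C$ from boundedness, these yield the stated bound.

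With the key estimate in hand, (A\ref{hyp:C}) (uniform minorization plus $\eta$-reversibility) yields via a standard Markov-chain Poincaré / Kipnis--Varadhan argument the bound $\mathbb{E}[(\bar H)^2]\leq C\eta(H^2)/N\leq C(h')^\beta/N$ for the stationary chain. For the remainder I would introduce the event $\mathcal{A}=\{\bar\psi_j\geq B_j/2,\ j=1,2\}$. Uniform ergodicity plus boundedness of $F_j,\psi_j$ yields fourth-moment bounds $\mathbb{E}[(\bar F_j)^4],\mathbb{E}[(\bar\psi_j-B_j)^4]\leq C/N^2$, so by Chebyshev $\mathbb{P}(\mathcal{A}^c)\leq C/N^2$; on $\mathcal{A}$, Cauchy--Schwarz gives $\mathbb{E}[E_{\mathrm{rem}}^2\mathbf{1}_{\mathcal{A}}]\leq C/N^2$, while on $\mathcal{A}^c$ the full estimator error is uniformly bounded by $4\|\varphi\|_\infty$, so $\mathbb{E}[E_{\mathrm{rem}}^2\mathbf{1}_{\mathcal{A}^c}]\leq C/N^2$. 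The overall MSE is thus $C(h')^\beta/N + C/N^2$, which fits the stated form.

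The main obstacle is the estimate $\eta(H^2)\leq C(h')^\beta$: each $F_j$ individually has $\eta(F_j^2) = O(1)$, and the $(h')^\beta$ improvement arises only through subtractive cancellation between the two levels. This cancellation works because (A\ref{hyp:B}) quantifies how the coupling $Q_{\theta,h,h'}$ concentrates $(x,x')$, and (A\ref{hyp:A})'s Lipschitz control on $g_\theta$ then transfers this concentration to the importance weights $(H_1,H_2)$. A secondary technical difficulty is the change of reference measure from (A\ref{hyp:B}) to $\eta$, handled by the uniform upper and lower bounds on $g_\theta$ in (A\ref{hyp:A}) at the cost of an $n$-dependent constant.
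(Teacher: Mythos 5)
Your overall architecture matches the paper's: the paper's proof is a two-line sketch that defers to Proposition \ref{prop:main} (the spectral-gap variance bound for the centered MCMC average of $\varphi H_{1,\theta}-\varphi' H_{2,\theta}$, whose $\eta$-variance is $\mathcal{O}((h')^{\beta})$ by Lemma \ref{lem:tres1}) together with a ratio-estimator decomposition (Lemma C.3 of \cite{mlpf}) and boundedness. Your reconstruction of the key cancellation $\eta(H^2)\leq C(h')^{\beta}$ is the same idea as Lemma \ref{lem:tres1}, executed by a direct Lipschitz splitting of $\varphi(x)H_1-\varphi(x')H_2$ rather than the paper's induction on $n$; one small repair is needed where you invoke (A\ref{hyp:B}) to control $\sum_p|x_p-x_p'|^2$, since the coordinate maps are Lipschitz but not bounded --- apply (A\ref{hyp:B}) directly to $x_{0:k}\mapsto g_{\theta}(x_p,y_p)$ (bounded and Lipschitz by (A\ref{hyp:A})), which is what the paper's induction step does. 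Your spectral-gap step is identical to the paper's Proposition \ref{prop:main}.

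The genuine gap is in the remainder: your final bound $C(h')^{\beta}/N + C/N^{2}$ does \emph{not} imply the stated $C(h')^{\beta}/N$, because $1/N^{2}\leq (h')^{\beta}/N$ fails whenever $N\lesssim (h')^{-\beta}$ --- and this is exactly the regime of the optimal allocation $N_l\propto \epsilon^{-2}h_l^{(\beta+\gamma)/2}$ at the finer levels, so the extra term would degrade Corollary \ref{cor:cost}. The loss occurs because you group the second-order terms as $\bar F_j(\bar\psi_j-B_j)/(B_j\bar\psi_j)$ for each $j$ separately: neither factor is small in $h'$, so Cauchy--Schwarz can only give $C/N^{2}$. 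The fix is to pair the $j=1$ and $j=2$ remainders before estimating, e.g.\ write the full error as $(D_{\phi}\bar\psi_2-\bar\phi_2 D_{\psi})/(\bar\psi_1\bar\psi_2)-(D_A B_2-A_2 D_B)/(B_1B_2)$ with $D_{\phi}=\bar\phi_1-\bar\phi_2$, $D_A=A_1-A_2$, etc., and then subtract so that every resulting product contains at least one factor among $\{D_{\phi}-D_A,\ D_{\psi}-D_B,\ D_A,\ D_B\}$, each of which carries $(h')^{\beta/2}$ (the first two also carry $N^{-1/2}$). Relatedly, your good event $\mathcal{A}$ contributes another bare $C/N^{2}$ through $\mathbb{P}(\mathcal{A}^{c})$; it is unnecessary here, since $H_{j,\theta}\geq(\underline{C}/\overline{C})^{n}$ pointwise by (A\ref{hyp:A}) and the choice of $G_{p,\theta}$, so $\bar\psi_j$ is deterministically bounded below --- this is the ``boundedness of certain quantities'' the paper alludes to.
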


\begin{proof}
The result follows by using Lemma C.3.~of \cite{mlpf}, the $C_2-$inequality, the boundedness of certain quantities and 
Proposition \ref{prop:main}.
The proof is omitted as it is similar to the calculations in \cite{mlpf}.
\end{proof}

\subsection{A Return to Multilevel Considerations}


Returning to Section \ref{sec:mlmc}, we assume that $h_l = 2^{-l}$ and
introduce the further assumption
\begin{ass}\label{ass:cost}
The cost to simulate $E_l^{N_l}$ in \eqref{eq:el} is controlled by 
$\mathsf{C}(E_l^{N_l}) \leq C N_l h_l^{-\gamma}$, and the bias is controlled by
$$
|\mathbb{E}_{\pi_{h_L}}(\varphi(\theta,X_{0:n}))-\mathbb{E}_\pi(\varphi(\theta,X_{0:n}))| \leq C h_L^{\alpha} \ ,
$$
for $\gamma, \alpha, C > 0$.
\end{ass}
Following assumption (A\ref{hyp:B}), $\alpha=\beta/2$ satisfies the above, 
but it may be larger, e.g. for Euler-Maruyama in which $\alpha=\beta$.

Given $\epsilon > 0$, in order to ensure the MSE is $\mathcal{O}(\epsilon^2)$,  
the term \eqref{eq:bias} 
must be $\mathcal{O}(\epsilon^2)$.
Following from Assumption (A\ref{hyp:B}), it suffices to let 
$L\propto {2|\log(\epsilon)|}/{\beta}$ so that $h_L = \epsilon$.

Following from Theorem \ref{thm:main}, 
$$
\sum_{l=0}^L\mathbb{E}[\bar{E}_l^{N_l}(\varphi)^2]  \leq C\sum_{l=0}^L \frac{h_l^\beta}{N_l},
$$
and note that the constant $C$ may depend upon the time parameter $n$, 
which has been suppressed from the notation; we return to this point below.

Suppose we minimize COST $=\sum_{l=0}^L h_l^{-\gamma} N_l$ subject to $\sum_{l=0}^L \frac{h_l^\beta}{N_l}=\mathcal{O}(\epsilon^2)$ as a function of $N_0,\dots,N_L$.
This is exactly considered in \cite{giles:08} for $\gamma=1$ and
later in \cite{scheichl} for $\gamma\neq 1$, and yields that 
\begin{equation}\label{eq:en}
N_l\propto \varepsilon^{-2}K_L h_l^{(\beta+\gamma)/2},
\end{equation}
where $K_L = \sum_{l=1}^{L} h_l^{(\beta-\gamma)/2}$ (see also \cite{mlpf, mlsmcnorm}).
This gives a cost of $\mathcal{O}(\varepsilon^{-2} K_L^{2})$ 
per time step. Hence the following corollary is immediate.

\begin{cor}[ML Cost]\label{cor:cost} 
Given (A\ref{hyp:A}-\ref{hyp:C}) and Assumption \ref{ass:cost}, for any $n>0$ and any
$\varphi\in\mathcal{B}_b(\Theta\times\mathsf{X}^{n+1})\cap\textrm{Lip}(\Theta\times\mathsf{X}^{n+1})$,
$(L,\{N_l\}_{l=1}^L)$ can be chosen such that the estimator
$
\sum_{l=1}^L E_l^{N_l}(\varphi),
$ 
with $E_l^{N_l}$ given in \eqref{eq:el}, satisfies
$$
\mathbb{E}\left [ |\sum_{l=1}^L E_l^{N_l}(\varphi) - \mathbb{E}_\pi(\varphi(\theta,X_{0:n}))|^2 \right ] \leq C \epsilon^2 \ ,
$$
for some $C>0$, for a total cost controlled by
\begin{equation}\label{eq:totcost}
{\rm COST} \leq C 
\begin{cases}
\epsilon^{-2}, & \text{if} \quad \beta > \gamma,\\
\epsilon^{-2} |\log(\epsilon)|^2, & \text{if} \quad \beta = \gamma,\\
\epsilon^{- \left( 2 + \frac{\gamma-\beta}{\alpha} \right)}, & \text{if} \quad \beta < \gamma.
\end{cases}
\end{equation}
\end{cor}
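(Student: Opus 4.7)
The plan is to follow the classical Giles \cite{giles:08} cost analysis, extended to arbitrary cost exponent $\gamma$ as in \cite{scheichl}, using Theorem \ref{thm:main} as the per-level variance bound. The three ingredients are a bias/variance split of the MSE, a constrained optimization of the $\{N_l\}$, and a case analysis of the geometric sum $K_L$.

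First I would decompose the MSE of $\sum_{l=0}^L E_l^{N_l}(\varphi)$ as squared bias plus variance. Because the $L+1$ pairs are simulated independently, the variance splits additively as $\sum_{l=0}^L \mathbb{E}[\bar{E}_l^{N_l}(\varphi)^2]$, while the overall bias equals the discretization bias $\mathbb{E}_{\pi_{h_L}}(\varphi)-\mathbb{E}_\pi(\varphi)$. Assumption \ref{ass:cost} bounds the latter by $C h_L^{\alpha}$, so choosing $L \asymp |\log\epsilon|/(\alpha\log 2)$, equivalently $h_L \asymp \epsilon^{1/\alpha}$, forces the squared bias to be $O(\epsilon^2)$. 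Applying Theorem \ref{thm:main} to each pair $(h_l,h_{l-1})$ gives $\mathbb{E}[\bar{E}_l^{N_l}(\varphi)^2] \leq C h_l^{\beta}/N_l$, so the total variance contribution is at most $C \sum_{l=0}^L h_l^{\beta}/N_l$ with a constant independent of $l$ and $L$.

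Next I would minimize $\sum_l h_l^{-\gamma}N_l$ subject to $\sum_l h_l^{\beta}/N_l = \epsilon^2$ by Lagrange multipliers. Differentiating yields $N_l \propto h_l^{(\beta+\gamma)/2}$, and imposing the constraint fixes the proportionality constant at $\epsilon^{-2}K_L$ with $K_L = \sum_{l=0}^L h_l^{(\beta-\gamma)/2}$, reproducing \eqref{eq:en}. Substituting back, the total cost becomes $\epsilon^{-2} K_L^2$. Evaluating $K_L$ with $h_l = 2^{-l}$ gives the three regimes: for $\beta > \gamma$ the series is summable so $K_L = O(1)$; for $\beta = \gamma$ each term equals $1$ and $K_L = O(L) = O(|\log\epsilon|)$; for $\beta < \gamma$ the series is dominated by its last term $h_L^{(\beta-\gamma)/2} \asymp \epsilon^{-(\gamma-\beta)/(2\alpha)}$, so $K_L^2 \asymp \epsilon^{-(\gamma-\beta)/\alpha}$. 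Combined with the prefactor $\epsilon^{-2}$ this produces the three branches of \eqref{eq:totcost}. The only point requiring care, rather than being a genuine obstacle, is to confirm that the constant in Theorem \ref{thm:main} is uniform across levels so that the per-level variance constants do not conspire to introduce an extra factor of $L$; this is immediate since that constant depends on $n$ and $\varphi$ but not on $(h,h')$.
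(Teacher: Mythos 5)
Your proposal is correct and follows essentially the same route as the paper: the corollary is stated as ``immediate'' from the preceding discussion, which consists precisely of your three ingredients --- the bias/variance split with $L$ chosen via Assumption \ref{ass:cost} so that the squared bias is $\mathcal{O}(\epsilon^2)$, the per-level variance bound $Ch_l^{\beta}/N_l$ from Theorem \ref{thm:main}, and the constrained optimization yielding \eqref{eq:en} and the cost $\mathcal{O}(\epsilon^{-2}K_L^2)$ with the three regimes of $K_L$. Your closing remark about uniformity of the constant across levels is a sensible check and is consistent with the quantifier order in Theorem \ref{thm:main}.
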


In contrast, for the same scenario, the computational cost of PMCMC is 
$\mathcal{O}(\epsilon^{-2-\gamma/\alpha})$ per time step, which 
is asymptotically greater than the method developed here.

It is remarked that all of our constants depend upon the time parameter (number of data points) and this element has been ignored. This is due to the technical complexity of the approach. 
We expect that the constants can be made time-uniform, 
and hence we conjecture that the results hold true uniformly in time. 
Then $N_l$ can be chosen as above, 
and for Euler Maruyama ($\beta=\gamma=1$ \cite{GrahamTalay})
the cost for a given $n$ will be 
$\mathcal{O}(n^2|\log(\epsilon)|^2\epsilon^{-2})$, with similar results for $\beta\neq 1$,
according to \eqref{eq:totcost}. 
This results because one needs to take $M=\mathcal{O}(n)$ for the particle filter in PMMH 
\cite{andrieu} and the cost to obtain a 
single sample particle filter trajectory is $\mathcal{O}(n)$.
A verification of this  is left for future work.

\section{Numerical Simulations}\label{sec:numerics}

\subsection{Ornstein-Uhlenbeck process}

First, we consider the following Ornstein-Uhlenbeck process,
\begin{align*}
  dX_t &= \theta(\mu - X_t) + \sigma dW_t,\qquad X_0 = x_0 \\
  Y_k | X_{k\delta} &\sim \mathcal{N}(X_{k\delta}, \tau^2),
\end{align*}
where $\mathcal{N}(m, \tau^2)$ denotes the Normal distribution with mean $m$
and variance $\tau^2$. Further, the parameters $(\theta,\sigma)$ are unknown
and are given the following priors,
\begin{equation*}
  \theta \sim \mathcal{G}(1, 1), \qquad
  \sigma \sim \mathcal{G}(1, 0.5)
\end{equation*}
where $\mathcal{G}(a,b)$ denotes the Gamma distribution with shape $a$ 
and scale $b$. 
The remaining parameters are defined as constants, $x_0 =
0$, $\mu = 0$, $\delta = 0.5$, and $\tau^2 = 0.2$. A data set with 100
observations is simulated with $\theta = 1$ and $\sigma = 0.5$.

\subsection{Langevin SDE}

Consider the following Langevin SDE,
\begin{align*}
  dX_t &= \frac{1}{2}\nabla\log\pi(X_t) + \sigma dW_t,\qquad X_0 = x_0\\
  Y_k|X_{k\delta} &\sim \mathcal{N}(0, \tau^2\exp{X_{k\delta}}),
\end{align*}
where $\pi(x)$ denote the probability density function of a Student's
$t$-distribution with $\theta$ degrees of freedom. The parameters of interest
are $(\theta,\sigma)$, and these are given prior,
\begin{equation*}
  \theta \sim \mathcal{G}(1, 1), \qquad
  \sigma \sim \mathcal{G}(1, 1)
\end{equation*}
The constants are $x_0 = 0$ and $\delta = 1$. A data set with 1,000
observations is simulated with $\theta = 10$, $\sigma = 1$, and $\tau^2 = 1$.

\subsection{Simulation settings}

The simulations proceed as the following. Let $h = 2^{-l}$ be the accuracy
parameter. At each level $l$, we set the number of particles in the PMCMC
kernel be $M = \mathcal{O}(n)$ fixed, and set the number of PMCMC samples for
estimation according to the multilevel analysis. Let $N_l^L$ denote the number
of samples at level $l$ within a simulation that targets $L$-level error, $L =
1,\dots$. The value of $N_0^1$ is determined empirically with variance
estimated from 100 samplers. For comparison, a single-level PMCMC sampler is
also considered for each $L$. Its number of samples $N^L$ is determined
empirically by running 100 simulations simultaneously. And these chains are run
until the estimated error of the 100 estimates matches that of the multilevel
sampler.  In all situations, a fixed burn-in period of 10,000 iterations is
used.  This is reasonable given the fast decorrelation of the chains, as
illustrated by the estimated autocorrelation of the single level PMCMC sampler
for $L=8$ in Figure \ref{fig:acf}. The autocorrelation functions look similar
for all $l$ for the multilevel sampler.

\subsection{Results}

We consider the choice of $M = \mathcal{O}(n)$.
The main results of the cost vs.~error are shown
in Figure~\ref{fig:cost}. The estimated cost rates are listed in
table~\ref{tab:rates}.  
It is shown in the appendix that for Euler discretization the method satisfies the 
assumptions (A\ref{hyp:A}-\ref{hyp:C}) with $\beta=2$ in (A\ref{hyp:B}), 
since the diffusion term $b_\theta$ is constant in $x$ \cite{GrahamTalay}.
Furthermore, Assumption \ref{ass:cost} holds with $\gamma=\alpha=1$.  
Therefore, the theoretical results of Theorem \ref{thm:main} and Corollary \ref{cor:cost}
predict the rate $\mathcal{O}(\epsilon^{-2})$.
Standard PMMH will incur a cost of $\mathcal{O}(\epsilon^{-3})$.
The numerical results confirm this.  

\begin{figure}
  \includegraphics[width=\linewidth]{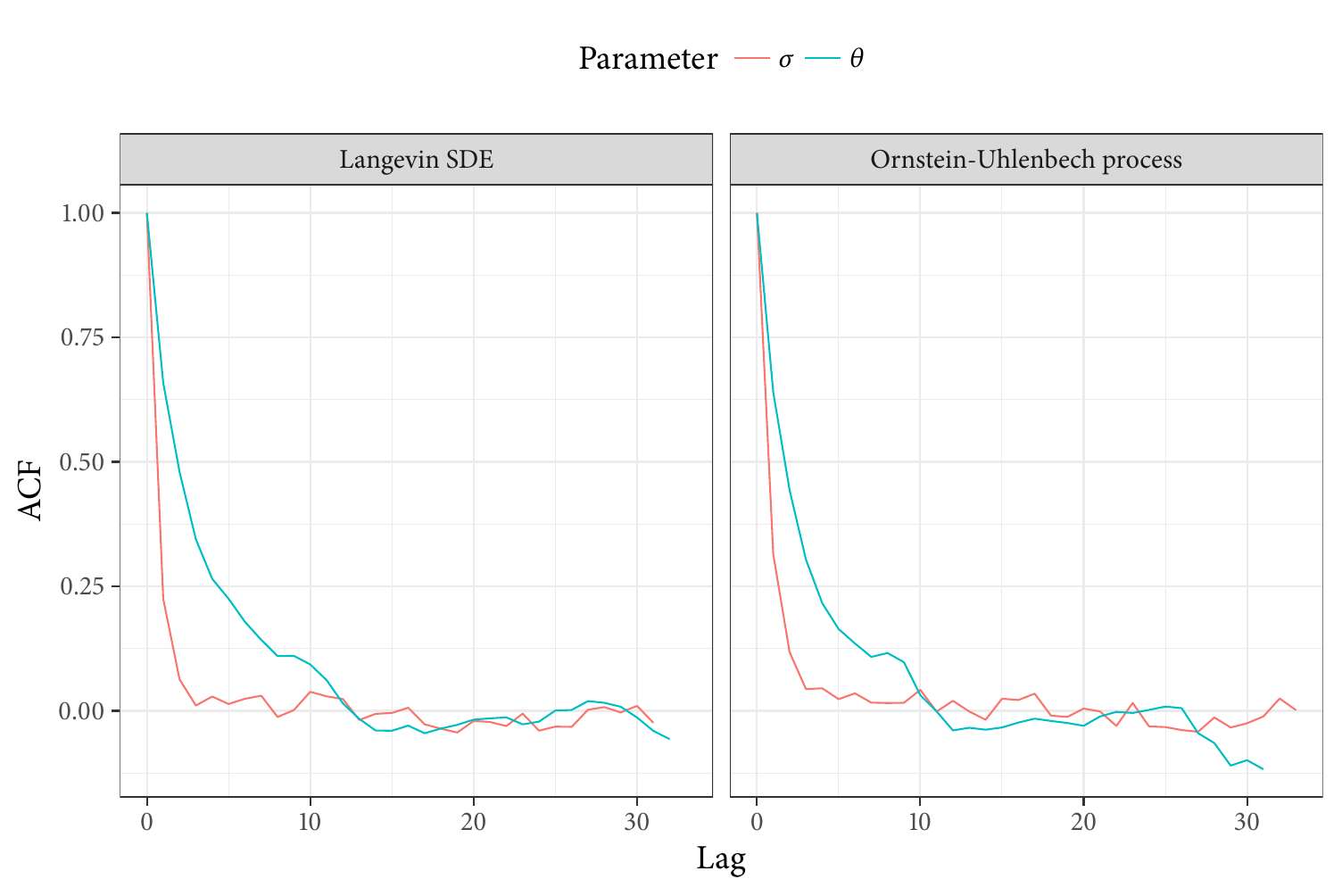}
  \caption{Autcorrelation of typical PMCMC chains.}
  \label{fig:acf}
\end{figure}

\begin{figure}
  \includegraphics[width=\linewidth]{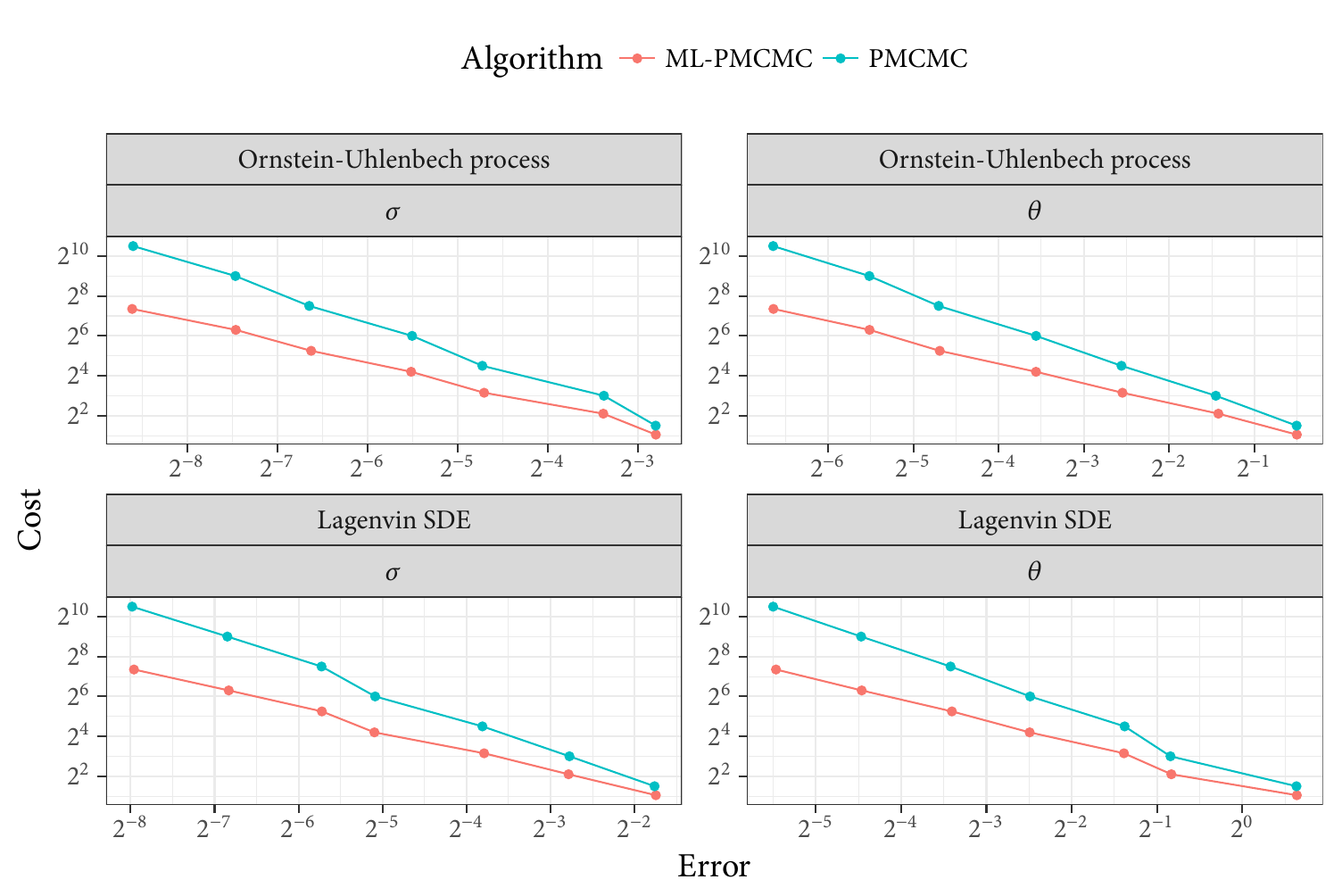}
  \caption{Cost vs. MSE for the 2 parameters for each of the 2 SDEs.}
  \label{fig:cost}
\end{figure}

\begin{table}\centering
  \begin{tabular}{|c|c|c|c|}
\hline
    Model & Parameter & ML-PMCMC & PMCMC \\
    \hline
    Ornstein-Uhlenbech process & $\theta$ & $-1.022$ & $-1.463$ \\
                               & $\sigma$ & $-1.065$ & $-1.522$ \\
    Langevin SDE               & $\theta$ & $-1.060$ & $-1.508$ \\
                               & $\sigma$ & $-1.023$ & $-1.481$ \\
    \hline
  \end{tabular}
\caption{Estimated rates of convergence of MSE with respect to cost 
for various parameters, fitted to the curves in Figure \ref{fig:cost}.}
\label{tab:rates}\end{table}

\subsubsection*{Acknowledgements}
AJ \& YZ were supported by an AcRF tier 2 grant: R-155-000-161-112. AJ is affiliated with the Risk Management Institute, the Center for Quantitative Finance 
and the OR \& Analytics cluster at NUS. KK \& AJ acknowledge CREST, JST for additionally supporting the research.
KJHL was supported by ORNL LDRD Strategic Hire grant 32112580.

\appendix

\section{Technical Results}

A Markov kernel $K$ can be viewed as a linear operator $(Kf)(w)=\int K(w,dw^*)f(w^*)$ for $f:\mathsf{W}\rightarrow\mathbb{R}$
on a Hilbert space
$$
L^2_0(\eta) := \{f : \mathsf{W} \rightarrow \mathbb{R}; \int|f(w)|\eta(dw)<\infty,
\int f(w)\eta(dw) = 0\}
$$
with an inner product $\langle f,g\rangle=\int f(w)g(w)\eta(dw)$ and 
norm $\|f\|_2=\sqrt{\langle f,f\rangle}$. Let $\|K\|_2=\sup_{f\in L^2_0(\eta),f\neq 0}\|Kf\|_2/\|f\|_2$ be the operator norm.

By D\"oblin condition (A\ref{hyp:C}), we have the total variation distance bound $\|K(w,\cdot)-\eta\|_{\mathrm{TV}}=\sup_{A\in \mathcal{W}}|K(w,A)-\eta(A)|\le 1-\xi\ (\forall w\in \mathsf{W})$
for some $\xi\in (0,1)$. Since $K$ is an Metropolis-Hastings kernel, it has $\eta$-reversibility. Therefore, the total variation bound implies $L^2$-spectral gap 
$$
\|K^m\|_2\le (1-\xi)^m \ ,
$$
by Theorem 2.1 of \cite{MR1448322}. 

For $\mu$ a finite measure on a measurable space $(\mathsf{E},\mathcal{E})$ and $\varphi\in\mathcal{B}_b(\mathsf{E})$
$$
\mu(\varphi) = \int_E \varphi(x)\mu(dx).
$$

%
%

Defining $v^i=(\theta^i,z_{0:n}^i)$ as the relevant variables of $w^i$ from the MCMC kernel, 
and defining 
$$
\tilde{\varphi}_h(v^i) := \Big\{\varphi(\theta^i,x_{0:n}^i)H_{1,\theta^i}(\theta^i,z_{0:n}^i) - 
\varphi(\theta^i,x_{0:n}'^i)H_{2,\theta^i}(\theta^i,z_{0:n}^i)\Big\},
$$
we are interested in estimates of the form:
$$
\frac{1}{N}\sum_{i=1}^N\tilde{\varphi}_h(v^i). 
$$

\begin{proposition}\label{prop:main}
Assume (A\ref{hyp:A}-\ref{hyp:C}). 
Suppose that $\{W^i\}_i$ is a Markov chain with the Markov kernel $K$, and $W^1\sim \eta$. 
Then for any $n>0$,
there exists a  $\beta > 0 $ 
such that for any $\varphi\in\mathcal{B}_b(\Theta\times\mathsf{X}^{n+1})\cap\textrm{Lip}(\Theta\times\mathsf{X}^{n+1})$
there exists a $C<+\infty$ such that
$$
\mathbb{E}\Big[\Big(\frac{1}{N}\sum_{i=1}^N\tilde{\varphi}_h(V^i)-\pi_{h,h'}(\tilde{\varphi}_h)\Big)^2\Big] \leq \frac{C(h')^{\beta}}{N}, 
$$
where $V^i=(\theta, Z_{0:n}^i)$ is the relevant variables of $W^i$. 
\end{proposition}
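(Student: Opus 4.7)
The plan is to combine an $L^2$ spectral gap argument for the Markov chain with a direct bound showing that $\tilde{\varphi}_h$ has small $L^2(\pi_{h,h'})$ norm. Since $W^1\sim \eta$ and $K$ is $\eta$-reversible, for any $f\in L^2_0(\eta)$ stationarity and reversibility give
\begin{equation*}
\mathbb{E}\Big[\Big(\tfrac{1}{N}\sum_{i=1}^N f(W^i)\Big)^2\Big] = \frac{1}{N^2}\sum_{i,j=1}^N \langle f, K^{|i-j|} f\rangle \leq \frac{1}{N^2}\sum_{i,j=1}^N (1-\xi)^{|i-j|} \|f\|_2^2 \leq \frac{C}{N\xi}\|f\|_2^2,
\end{equation*}
using the $L^2$ spectral gap $\|K^m\|_2\leq (1-\xi)^m$ noted in the excerpt. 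Because $\tilde{\varphi}_h$ depends only on the components $v\in\mathsf{V}$ and $\eta$ marginalizes to $\pi_{h,h'}$ on $\mathsf{V}$, applying this with $f(w):=\tilde{\varphi}_h(v(w))-\pi_{h,h'}(\tilde{\varphi}_h)$ reduces the proposition to proving $\pi_{h,h'}(\tilde{\varphi}_h^{\,2})\leq C(h')^\beta$.

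To establish this bound I would decompose
\begin{equation*}
\tilde{\varphi}_h = \bigl(\varphi(\theta,x_{0:n})-\varphi(\theta,x'_{0:n})\bigr)H_{1,\theta} + \varphi(\theta,x'_{0:n})\bigl(H_{1,\theta}-H_{2,\theta}\bigr),
\end{equation*}
use $(a+b)^2\leq 2a^2+2b^2$, and exploit the uniform bounds $|H_{q,\theta}|\leq 1$ and $\|\varphi\|_\infty<\infty$. For the first summand, (A\ref{hyp:A}) implies $G_{p,\theta}\in[\underline{C},\overline{C}]$, so the Radon-Nikodym derivative of $\pi_{h,h'}$ with respect to the base measure $\pi_\theta(\theta)\nu_\theta(z_0)\prod_p Q_{\theta,h,h'}(z_{p-1},z_p)$ appearing in (A\ref{hyp:B}) is bounded above and below by constants depending only on $n$; a change of measure plus (A\ref{hyp:B}) with $q=2$ then yields the $(h')^\beta$ rate.

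For the second summand I would expand $|H_{1,\theta}-H_{2,\theta}|$ via the telescoping bound $|\prod_p a_p-\prod_p b_p|\leq \sum_p|a_p-b_p|$ (valid since all factors lie in $[0,1]$), then use the global Lipschitz property of $g_\theta$ from (A\ref{hyp:A}) and the uniform lower bound on $G_{p,\theta}$ to obtain $|H_{1,\theta}-H_{2,\theta}|\leq C\sum_{p=1}^n|x_p-x'_p|$. Squaring, invoking Cauchy-Schwarz in $p$, and applying (A\ref{hyp:B}) coordinate-by-coordinate (after the same change of measure) produces the required $(h')^\beta$ bound. The main technical obstacle is that (A\ref{hyp:B}) is phrased for bounded Lipschitz $\varphi$ while the coordinate maps $(\theta,x_{0:n})\mapsto x_p$ are only Lipschitz; this can be addressed by truncating at a scale controlled by the bounded-moment property (ii) of Assumption \ref{ass:diff}, or by reading (A\ref{hyp:B}) as allowing $C$ to depend on the Lipschitz constant and $L^\infty$ cutoff of the test function. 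Combining the two summand bounds with the spectral-gap estimate above closes the proof.
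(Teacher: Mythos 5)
Your proof is correct and its first half coincides exactly with the paper's argument: the paper likewise writes $f(w)=\tilde{\varphi}_h(v)-\pi_{h,h'}(\tilde{\varphi}_h)$, expands $\frac{1}{N^2}\sum_{i,j}\langle f,K^{|i-j|}f\rangle$, and uses the Doeblin-plus-reversibility spectral gap $\|K^m\|_2\le(1-\xi)^m$ to reduce everything to bounding $\|f\|_2^2$. Where you genuinely diverge is in how that $L^2(\pi_{h,h'})$ bound is obtained. The paper delegates it to Lemma \ref{lem:tres1}, which is proved by induction on $n$: at each step the normalizing constant is lower-bounded via (A\ref{hyp:A}), and the increment $|\varphi\prod_p g_\theta(x_p,y_p)-\varphi'\prod_p g_\theta(x_p',y_p)|$ is split so that (A\ref{hyp:B}) is only ever applied to the function $(\theta,x_{0:k})\mapsto\varphi(\theta,x_{0:k})g_\theta(x_k,y_k)$, which is bounded and Lipschitz. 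You instead decompose $\tilde{\varphi}_h$ algebraically into $(\varphi-\varphi')H_{1,\theta}+\varphi'(H_{1,\theta}-H_{2,\theta})$ and telescope the product of likelihood ratios in one shot; this is more direct and makes the mechanism (closeness of the coupled paths propagated through the Lipschitz observation density) more transparent, at the cost of the technical wrinkle you yourself flag: pushing the telescoping all the way down to $|x_p-x_p'|$ takes you outside the literal scope of (A\ref{hyp:B}), which is stated only for bounded Lipschitz test functions. You do not actually need the truncation device, though: stop the telescoping one step earlier at $|H_{1,\theta}-H_{2,\theta}|\le \underline{C}^{-1}\sum_{p=1}^n|g_\theta(x_p,y_p)-g_\theta(x_p',y_p)|$ and apply (A\ref{hyp:B}) with $q=2$ to the test functions $(\theta,x_{0:n})\mapsto g_\theta(x_p,y_p)$, which are bounded and Lipschitz by (A\ref{hyp:A}); together with your change-of-measure observation (the Radon--Nikodym derivative of $\pi_{h,h'}$ with respect to the reference measure in (A\ref{hyp:B}) is bounded by $(\overline{C}/\underline{C})^n$ up to the normalizing constant) this closes the argument entirely within the stated hypotheses and dispenses with the induction. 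Both routes yield the same $C(h')^\beta/N$ bound with $n$-dependent constants.
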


\begin{proof}
Denote the map $w^i\mapsto v^i$ by $\psi$. Then 
$$
\mathbb{E}\Big[\Big(\frac{1}{N}\sum_{i=1}^N\tilde{\varphi}_h(V^i)-\pi_{h,h'}(\tilde{\varphi}_h)\Big)^2\Big] =
\mathbb{E}\Big[\Big(\frac{1}{N}\sum_{i=1}^Nf(W^i)\Big)^2\Big]
$$
for $f(w)=\tilde{\varphi}_h\circ\psi(w)-\eta(\tilde{\varphi}_h\circ\psi)=\tilde{\varphi}_h(v)-\pi_{h,h'}(\tilde{\varphi}_h)$. 
By simple algebra, 
\begin{align*}
\mathbb{E}\Big[\Big(\frac{1}{N}\sum_{i=1}^Nf(W^i)\Big)^2\Big]=&
\frac{1}{N^2}\sum_{i,j=1}^N\langle f,K^{|i-j|} f\rangle\\
=&\frac{1}{N}\|f\|_2^2 +\frac{2}{N^2}\sum_{n=1}^{N-1}(N-n)\langle f,K^nf\rangle \\
\le &\frac{1}{N}\|f\|_2^2 +\frac{2}{N^2}\sum_{n=1}^{N-1}(N-n)\|K\|_2^n\|f\|_2^2 \\
\le &\frac{1}{N}\|f\|_2^2 +\frac{2}{N}\sum_{n=1}^\infty\|K\|_2^n\|f\|_2^2 =\frac{1}{N}\frac{3-\|K\|_2}{1-\|K\|_2}\|f\|_2^2\le \frac{3\|f\|_2^2}{N\xi}. 
\end{align*}
On the other hand, by Lemma \ref{lem:tres1}, 
\begin{align*}
	\|f\|_2^2=\mathbb{E}_{\pi_{h,h'}}\left[\left\{\tilde{\varphi}_h(V^i)-\pi_{h,h'}(\tilde{\varphi}_h)\right\}^2\right]\le C(h')^\beta. 
\end{align*}
Thus, the claim follows. 
\end{proof}


\begin{lemma}\label{lem:tres1}
Assume (A\ref{hyp:A}-\ref{hyp:B}).  Then for any $n>0$, $q\in\{1,2\}$  there exists a  $\beta > 0 $ 
such that for any $\varphi\in\mathcal{B}_b(\Theta\times\mathsf{X}^{n+1})\cap\textrm{Lip}(\Theta\times\mathsf{X}^{n+1})$
there exists a $C<+\infty$
$$
\left(\mathbb{E}_{\pi_{h,h'}}[|\varphi(\theta,X_{0:n})H_{1,\theta}(\theta,Z_{0:n})-\varphi(\theta,X_{0:n}')H_{2,\theta}(\theta,Z_{0:n})|^q] \right)^{3-q} \leq C (h')^{\beta}.
$$
\end{lemma}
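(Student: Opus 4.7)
The plan is to split the integrand additively as
$$
\varphi(\theta,X_{0:n})H_{1,\theta}(\theta,Z_{0:n}) - \varphi(\theta,X'_{0:n})H_{2,\theta}(\theta,Z_{0:n}) = H_{1,\theta}\bigl(\varphi(\theta,X_{0:n})-\varphi(\theta,X'_{0:n})\bigr) + \varphi(\theta,X'_{0:n})\bigl(H_{1,\theta}-H_{2,\theta}\bigr),
$$
and to treat each piece by the $C_r$-inequality, ultimately reducing both to an application of (A\ref{hyp:B}) after a change of measure from $\pi_{h,h'}$ to the coupled prior $\pi_\theta(\theta)\nu_\theta(z_0)\prod_{p=1}^n Q_{\theta,h,h'}(z_{p-1},z_p)$. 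The Radon--Nikodym derivative between these two is proportional to $\prod_{p=1}^n G_{p,\theta}(z_p)$, divided by the normalizing constant of $\pi_{h,h'}$; by (A\ref{hyp:A}), $\underline C \leq G_{p,\theta}(z_p) \leq \overline C$, so this derivative is bounded above and below by constants depending only on $n$, which is fixed.

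For the first summand, I use $|H_{1,\theta}|\leq 1$ by construction of $G_{p,\theta}$. After the change of measure, what remains is the quantity bounded by (A\ref{hyp:B}) verbatim, yielding the desired $(h')^\beta$ scaling with the prescribed $(3-q)$ exponent.

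For the second summand, $\varphi$ is bounded, so I only need a pointwise bound on $|H_{1,\theta}-H_{2,\theta}|^q$. Writing $a_p = g_\theta(x_p,y_p)/G_{p,\theta}(z_p)$ and $b_p = g_\theta(x_p',y_p)/G_{p,\theta}(z_p)$, both $a_p,b_p \in [\underline C/\overline C,1]$ by (A\ref{hyp:A}). The standard telescoping identity for products of numbers in $[0,1]$ gives
$$
|H_{1,\theta}-H_{2,\theta}| \leq \sum_{p=1}^n |a_p-b_p| \leq \frac{1}{\underline C}\sum_{p=1}^n |g_\theta(x_p,y_p)-g_\theta(x'_p,y_p)| \leq \frac{C_g}{\underline C}\sum_{p=1}^n |x_p - x'_p|,
$$
using the global Lipschitz property of $g_\theta$ in (A\ref{hyp:A}). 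By Jensen/power-mean, $|H_{1,\theta}-H_{2,\theta}|^q \leq C\,n^{q-1}\sum_{p=1}^n |x_p-x'_p|^q$. Again changing measure to the coupled prior and applying (A\ref{hyp:B}) coordinatewise with the test function $\varphi_p(\theta,x_{0:p})=x_p$ (or a smooth truncation thereof, handled by the uniform-in-$\theta$ moment bound of Assumption \ref{ass:diff} (ii)) delivers the same $(h')^\beta$ control after raising to the $(3-q)$ power.

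The main obstacle I anticipate is bookkeeping of the $(3-q)$ exponent together with the change-of-measure constants: one must verify that the $L^1$ version ($q=1$, yielding $(h')^{\beta/2}$ after squaring) and the $L^2$ version ($q=2$) both follow from the same decomposition and that, in the second term, the Lipschitz bound $|x_p-x'_p|$ is compatible with (A\ref{hyp:B})'s boundedness requirement on the test function via the moment hypothesis. Once these details are organized, the claim follows directly.
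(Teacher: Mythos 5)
Your decomposition is sound and ultimately rests on the same two ingredients as the paper's proof: the uniform lower bound on $\prod_p G_{p,\theta}$ (equivalently, on the normalizing constant of $\pi_{h,h'}$) coming from (A\ref{hyp:A}), and the strong-error assumption (A\ref{hyp:B}) applied to bounded Lipschitz test functions. The difference is organizational: the paper telescopes the product $\prod_p g_\theta(x_p,y_p)$ one factor at a time inside an induction on $n$, whereas you split additively into $H_{1,\theta}(\varphi-\varphi') + \varphi'(H_{1,\theta}-H_{2,\theta})$ and then telescope $H_{1,\theta}-H_{2,\theta}$ in one shot using the standard bound $|\prod a_p - \prod b_p|\le\sum_p|a_p-b_p|$ for factors in $[0,1]$. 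Your version avoids the induction and makes the $n$-dependence of the constant explicit, which is arguably cleaner; the exponent bookkeeping for $q\in\{1,2\}$ works out exactly as you describe.

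There is one step that does not go through as written: in the second summand you reduce $|a_p-b_p|$ all the way to $|x_p-x_p'|$ and then propose to apply (A\ref{hyp:B}) with the test function $\varphi_p(\theta,x_{0:p})=x_p$. That function is not in $\mathcal{B}_b$, and (A\ref{hyp:B}) only quantifies over bounded Lipschitz $\varphi$ with a constant $C$ that is allowed to depend on $\varphi$; applying it to a family of truncations $T_R$ therefore does not yield a constant uniform in $R$, so the truncation-plus-moment argument is not licensed by the assumptions as stated. The fix is simple and removes the detour entirely: stop at
$$
|a_p-b_p|\;\le\;\frac{1}{\underline{C}}\,\bigl|g_\theta(x_p,y_p)-g_\theta(x_p',y_p)\bigr|
$$
and apply (A\ref{hyp:B}) directly with the test function $(\theta,x_{0:p})\mapsto g_\theta(x_p,y_p)$, which is bounded and globally Lipschitz by (A\ref{hyp:A}). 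With that substitution your argument is complete and delivers the same rate as the paper's inductive proof.
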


\begin{proof}
We prove the result for $q=1$, the case $q=2$ being almost the same.
The result is proved by induction on $n$. Set $n=1$, then
$$
\mathbb{E}_{\pi_{h,h'}}[|\varphi(\theta,X_{0:1})H_{1,\theta}(\theta,Z_{0:1})-
\varphi(\theta,X_{0:1}')H_{2,\theta}(\theta,Z_{0:1})|] 
= 
$$
$$
\Big(\int_{\theta\times \mathsf{Z}^2} |\varphi(\theta,x_{0:1})H_{1,\theta}(\theta,z_{0:1})-\varphi(\theta,x_{0:1}')H_{2,\theta}(\theta,z_{0:1})| G_{1,\theta}(z_1)
\nu_\theta(z_0) \pi_{\theta}(\theta) d(z_{0:1},\theta)\Big)\times
$$
$$
\Big(\int_{\theta\times \mathsf{Z}^2}G_{1,\theta}(z_1)
\nu_\theta(z_0)\pi_{\theta}(\theta) d(z_{0:1},\theta)\Big)^{-1}.
$$
As $G_{1,\theta}(z)$ is uniformly (in $\theta,z$) bounded below, the denominator on the R.H.S.~is uniformly lower bounded by a constant that is independent of $h,h'$.
The numerator on the R.H.S.~is 
$$
\int_{\theta\times \mathsf{Z}^2} |\varphi(\theta,x_{0:1})H_{1,\theta}(\theta,z_{0:1})-\varphi(\theta,z_{0:1}')H_{2,\theta}(\theta,z_{0:1})| G_{1,\theta}(z_1)
\nu_\theta(z_0) \pi_{\theta}(\theta) d(z_{0:1},\theta) = 
$$
$$
\int_{\theta\times \mathsf{Z}^2}
|\varphi(\theta,x_{0:1})g_{\theta}(x_1,y_1)-\varphi(\theta,x_{0:1}')g_{\theta}(x_1',y_1)|
\nu_\theta(z_0) \pi_{\theta}(\theta) d(z_{0:1},\theta)
$$
Application of (A\ref{hyp:B}) hence yields
$$
\mathbb{E}_{\pi_{h,h'}}[|\varphi(\theta,X_{0:1})H_{1,\theta}(\theta,Z_{0:1})
-\varphi(\theta,X_{0:1}')H_{2,\theta}(\theta,Z_{0:1})|] \leq C (h')^{\beta/2}.
$$
Assuming the result for $k-1$, $k> 1$, by the above argument we only have to consider
$$
\int_{\theta\times \mathsf{Z}^{k+1}} |\varphi(\theta,x_{0:k})H_{1,\theta}(\theta,z_{0:k})-\varphi(\theta,x_{0:k}')H_{2,\theta}(\theta,z_{0;k})| \nu_\theta(z_0)
\prod_{p=1}^k
G_{p,\theta}(z_p)
Q_{\theta,h,h'}(z_{p-1},z_p) \pi_{\theta}(\theta) d(z_{0:k},\theta) = 
$$
$$
\int_{\theta\times \mathsf{Z}^{k+1}} |\varphi(\theta,x_{0:k})
\prod_{p=1}^k g_{\theta}(x_p,y_p)
-\varphi(\theta,x_{0:k}')\prod_{p=1}^k g_{\theta}(x_p',y_p)| 
\prod_{p=1}^k
Q_{\theta,h,h'}(z_{p-1},z_p) \nu_\theta(z_0)\pi_{\theta}(\theta) d(z_{0:k},\theta).
$$
The R.H.S.~can be upper-bounded by
$$
\int_{\theta\times \mathsf{Z}^{k+1}} \varphi(\theta,x_{0:k})g_{\theta}(x_k,y_k)|
\prod_{p=1}^{k-1}g_{\theta}(x_p,y_p)
-\prod_{p=1}^{k-1} g_{\theta}(x_p',y_p)|
\prod_{p=1}^k Q_{\theta,h,h'}(z_{p-1},z_p) \nu_\theta(z_0) \pi_{\theta}(\theta) d(z_{0:k},\theta)
+
$$
$$
\int_{\theta\times \mathsf{Z}^{k+1}}\prod_{p=1}^{k-1} g_{\theta}(x_p',y_p)
|\varphi(\theta,x_{0:k})g_{\theta}(x_k,y_k)-\varphi(\theta,x_{0:k}')g_{\theta}(x_k',y_k)|
\prod_{p=1}^k
Q_{\theta,h,h'}(z_{p-1},z_p) \nu_\theta(z_0) \pi_{\theta}(\theta) d(z_{0:k},\theta).
$$
The first term can be treated by the induction hypothesis and the second term via 
(A\ref{hyp:B}) which completes the proof.
\end{proof}

%
%

\section{Coupling Euler Approximations}\label{app:couple_euler}

Consider $(x,x')\in\mathsf{X}^2$, the current position of the discretized diffusions. Now we have $h,h'$ the discretization levels,
with $0<h<h'$ and for simplicity set $h'=2h$. Associated to the discretization level $h$ (resp.~$h'$), one must sample $k=\delta/h$ (resp.~$k'=\delta/h'$) points to obtain the
sampled position of the diffusion at the next observation time. Set $X(0)=X'(0)\sim f_\theta(x)d x$ then one can sample the fine discretization, for $m\in\{0,\dots,k-1\}$ as
$$
X(m+1) = X(m) + h a_\theta(X(m)) + \sqrt{h} b_\theta(X(m))\xi(m)
$$
where $\xi(m)\stackrel{\textrm{i.i.d.}}{\sim}\mathcal{N}(0,I_d)$ ($I_d$ is the $d\times d$ identity matrix). For the course discretization, using the same
simulated $\xi(0),\dots,\xi(k-1)$ we set for $m\in\{0,\dots,k'-1\}$
$$
X'(m+1) = X'(m) + 2h a_\theta(X'(m)) + \sqrt{h} b_\theta(X'(m))[\xi(2m)+\xi(2m+1)].
$$

Now, we want to check conditions (A\ref{hyp:B}) and (A\ref{hyp:C}) under Assumption \ref{ass:diff} (i,ii), (A\ref{hyp:A}) and the following assumption. 
\begin{ass}\label{ass:compact}
$\Theta$ is a compact set of $\mathbb{R}^{d_\theta}$, and 
$\pi_\theta:\Theta\rightarrow\mathbb{R}_+$ and 
$q(\theta^*|\theta):\Theta^2\rightarrow\mathbb{R}_+$
are continous and strictly positive.	
\end{ass}

By assumption, 
$Q_{\theta,h,h'}(z,z^*)$ is the density of $Z^*=(X(k),X'(k'))$ given $Z=(X(0),X'(0))$. 
Then, under Assumption \ref{ass:diff} (i, ii), 
the condition (A\ref{hyp:B}) is satisfied with $\beta =1$ for any $q=1,2$,
since this is the $L^q$ bound of the Euler-Maruyama scheme 
(in fact for constant diffusion coefficient $b_\theta$
it coincides with the Milstein method and $\beta=2$) \cite{GrahamTalay}. 

Next, we want to check the condition (A\ref{hyp:C}). The proposal density $\psi$ on $\mathsf{W}=\Theta\times \{1,\ldots, M\}^n\times Z^{M(n+1)}\times \{1,\ldots, M\}$ of PMMH is 
$$
\psi(w, w^*)=p(a_{0:n-1}^{*,1:M}, z_{0:n}^{*,1:M}|\theta^*)q(\theta^*|\theta)\frac{G_{n,\theta^*}(z_n^{*,l})}{\sum_{i=1}^MG_{n,\theta^*}(z_n^{*,i})}. 
$$
where $w=(\theta,a_{0:(n-1)}^{1:M}, z_{1:n}^{1:M},k)$, $w^*=(\theta^*,a_{0:(n-1)}^{*,1:M}, z_{1:n}^{*,1:M},l)$,
and $p(a_{0:n-1}^{*,1:M}, z_{0:n}^{*,1:M}|\theta^*)$ is defined in \eqref{eq:pf_law}. 
The transition kernel $K$ is
$$
K(w,dw^*)=\psi(w,w^*)\alpha(w,w^*)dw^* + \delta_w(dw^*)R(w), 
$$
where the acceptance probability $\alpha(w,w^*)$ is 
$$
\alpha(w,w^*)=\min\left\{1,\frac{p_{h,h'}^M(y_{0:n}|\theta^*)\pi_\theta(\theta^*)q(\theta|\theta^*)}{p_{h,h'}^M(y_{0:n}|\theta)\pi_\theta(\theta)q(\theta^*|\theta)}\right\}, 
$$
and the rejection probability $R(w)$ is 
$$
R(w)=1-\int_{w^*} \psi(w,w^*)\alpha(w,w^*)dw^*. 
$$
By (A\ref{hyp:A}) together with Assumption \ref{ass:compact}, 
$C_1=\inf_{w\in W}\alpha(w,w^*)>0$, and 
$$
\inf_w \psi(w,w^*)\ge \left\{\min_{\theta,\theta^*}q(\theta^*|\theta)\right\}p(a_{0:n-1}^{*,1:M}, z_{0:n}^{*,1:M}|\theta^*)\frac{G_{n,\theta^*}(z_n^{*,l})}{\sum_{i=1}^MG_{n,\theta^*}(z_n^{*,i})}=:C_2\psi(w^*)
$$
for a constant $C_2=\min_{\theta,\theta^*}q(\theta^*|\theta)>0$
with a probability density $\psi(w^*)$. 
Thus, we have
\begin{align*}
	K(w,dw^*)\ge C_1C_2 \psi(w^*)dw^*
\end{align*}
In particular, the condition (A3) is satisfied.

\end{document}